\newcommand{\shortv}[1]{}
\newcommand{\citeyear}{\cite}
\newcommand{\commentout}[1]{}
\declaretheorem{theorem}
\declaretheorem{definition}
\declaretheorem{property}
\newcommand{\A}{\mathcal{A}}
\begin{document}
%

\title{A Distributed Auctioneer for Resource Allocation in Decentralized Systems}


\author{Amin M. Khan\\
Universitat Polit\`{e}cnica de Catalunya, Barcelona\\
mkhan@ac.upc.edu
\and Xavier Vila\c{c}a\\
INESC-ID Lisboa\\
xvilaca@gsd.inesc-id.pt
\and Lu\'{i}s Rodrigues\\
ler@tecnico.ulisboa.pt
\and Felix Freitag\\
Universitat Polit\`{e}cnica de Catalunya, Barcelona\\
felix@ac.upc.edu
}

\maketitle



\begin{abstract}

In decentralized systems, nodes often need to coordinate to access shared resources in a fair manner. One approach to perform such arbitration is to rely on auction mechanisms. Although there is an extensive literature that studies auctions, most of these works assume the existence of a central, trusted auctioneer. Unfortunately, in fully decentralized systems, where the nodes that need to cooperate operate under separate spheres of control, such central trusted entity may not exist. Notable examples of such decentralized systems include community networks, clouds of clouds, cooperative nano data centres, among others. In this paper, we make theoretical and practical contributions to distribute the role of the auctioneer. From the theoretical perspective, we propose a framework of distributed simulations of the auctioneer that are Nash equilibria resilient to coalitions and asynchrony. From the practical perspective, our protocols leverage the distributed nature of the simulations to parallelise the execution. We have implemented a prototype that instantiates the framework for bandwidth allocation in community networks, and evaluated it in a real distributed setting.

\end{abstract}



\section{Introduction}
\label{sec:introduction}

Most distributed systems have limited resources that need to be shared by many nodes. 
For instance, in a network there is limited bandwidth that needs to be allocated to multiple nodes. 
In cloud applications, virtual machines (VMs) need to be allocated to different cloud users. 
Resource allocation is, therefore, a key problem in distributed systems.

Distributed resource allocation is particularly challenging when nodes
operate under different spheres of control and may not be willing to cooperate. 
Namely, a resource allocation strategy that assumes that all nodes 
execute a given algorithm may break if nodes 
may extract benefits by deviating from the expected behaviour.
Many examples of this problem can be found in the literature. 
The works of~\cite{Lee2007} and~\cite{Xiao2013} illustrate how
a network user may attempt to monopolize the bandwidth utilization 
if it has the opportunity. 
There is evidence that programmers can instrument their code 
to get an unfair advantage of several Unix schedulers~\cite{Grosu2005}. 
In shared infrastructures, like Grid systems, participating
users try to maximise their own usage to the detriment of 
the others~\cite{Lai2004}.
Dynamic wireless spectrum allocation suffers from unfair manipulation~\cite{Zhou2008}.
Social cloud computing~\cite{Caton2014}, 
and cooperative computing systems like BitTorrent~\cite{Liu2010}
suffer when users act selfishly in consuming resources.

An approach that has emerged as a viable alternative for the problem
above is to use economic models to address resource allocation, in
particular by resorting to auction systems~\cite{Riley1981}. 
As a result, an extensive
literature exists on the use of several types of auctions to perform
resource allocation in distributed systems~\cite{Niyato2013, Lai2004, Waldspurger1992}. 
In particular, the advent of the cloud computing model, where many
clients may compete for the resources managed by one or more
providers, has spurred the usage of different auction mechanisms in a
variety of resource allocation proposals for the
cloud~\cite{Popa2012, Niu2012, Wang:12, 
Zhang:13, Shi:14, Zheng2014Star, Zhang:14, Zhang2015Truthful}.

In these approaches, users are modelled as non-cooperative rational
players who are willing to pay for using resources or get paid for
providing those resources. Specifically, users declare to an
auctioneer the preference for different allocations of resources, and
the auctioneer executes some auction mechanism to derive an allocation
between users and resources that maximises social welfare (preferences of users for the allocation),
and the payments to be performed or
received by each user. The aim is to obtain an allocation with a social welfare as close as
possible to the optimal while ensuring truthfulness from the
users, such that they do not have incentives to lie about their
bids. In addition to maximal social welfare and truthfulness, other guarantees may be provided,
including computational efficiency and budget balance (the
payments made by the users outweigh the payments received).

These works assume that the auctioneer is trusted.  Unfortunately,
this is an unreasonable assumption in many of today's fully
decentralized systems, where all nodes are either resource consumers,
resource providers, or both. 
In this case, there is no natural candidate that 
can be trusted by all other nodes to run the auction
algorithm faithfully, given that any node may extract some
benefit by perturbing the auction result. In some sense, all current
distributed systems that rely on auctions to perform resource
allocation are not fully decentralized, because they depend on a
unique central point of control, which is the trusted node that runs
the auction algorithm. 
This leads to the observation that there is
still a substantial gap that needs to be bridged to apply these
results in fully decentralized settings.

This paper bridges this gap by proposing a framework of distributed
protocols that allows multiple resource providers in a decentralized
distributed system to simulate the role of the auctioneer.  Such simulation
raises significant challenges both from the theoretical and practical
points of view. From the theoretical perspective, although there is a
vast literature of distributed fault-tolerant algorithms, with very
few exceptions (for instance,~\cite{Abraham:13}), these works do not
consider rational behaviour. From the practical perspective, the
distribution of the auctioneer may incur in additional overhead. Our
paper addresses both concerns. First, we prove that our distributed
simulations are sound from a game theoretical
perspective. Specifically, we show that the simulations are
$k$-resilient (ex post) equilibria~\cite{Abraham:13}, i.e., Nash
equilibria resilient to asynchrony and coalitions of providers of size
at most $k$. Second, our protocols leverage the distributed nature of
the resulting virtual trusted entity to parallelise the resource
provisioning algorithm, compensating for the additional costs imposed
by coordination. 

We have implemented a prototype of our protocols and evaluated the
resulting system in a real instance of a fully decentralized network,
namely an experimental testbed for community networks~\cite{Braem2013}.  
Different from the traditional
business-focused model applied by telecommunication operators, each
user in a community network is an owner of a portion of the total
infrastructure, which builds the mesh network.
Community networks are an interesting application scenario for
our results  because they lack a central point of control and make
extensive use of resource sharing. In particular, we consider the
concrete problem of bandwidth reservation on the gateways that connect
the community network to the Internet.  

Even though in our example we only focus on bandwidth
reservation, the fundamental problems at hand are general and emerge
every time shared resources need to be allocated to users in a set of
providers, such as for allocation of processing and memory resources
as virtual machines in public clouds~\cite{Zhang2015Truthful},
assignment of frequencies in secondary wireless spectrum
markets~\cite{Zhou2008}, and resource scheduling in grid and cloud
infrastructures\cite{Lai2004}.

In summary, the main contributions of this paper are the following:

\vspace{0.3mm}
- We propose a framework for devising distributed protocols executed
among a decentralized set of service providers that correctly simulate the
auctioneer in a family of resource allocation auctions.

\vspace{0.3mm}
- We show that every implementation of the framework is a $k$-resilient (ex post) equilibrium.
These implementations also tolerate users that send invalid bids.

\vspace{0.3mm}
- We show that it is possible to leverage the distributed nature of our framework to parallelise implementations,
mitigating scalability issues of purely centralised solutions.

\vspace{0.3mm}
- We implemented instances of the framework and reported the results from its deployment on the actual Guifi nodes.

\vspace{0.3mm} 

The rest of the paper is organised as follows.  In
Section~\ref{sec:related-work} we present related work.
Section~\ref{sec:model} provides the system model.  In
Section~\ref{sec:design}, we present the framework for simulating the
auctioneer and in Section~\ref{sec:instances} we illustrate its
applicability to the execution of two different auction mechanisms,
with different computational properties, in the context of community networks.  In
Section~\ref{sec:evaluation} we discuss results from the experiments.
Section~\ref{sec:conclusion} concludes the paper.


\section{Related Work}
\label{sec:related-work}

A vast literature has addressed the problem of allocating resources
between providers and users following an auction approach~\cite{Nisan2001, 
Zheng2014Star, Wang:12,Zhang:13, Popa2012, Niu2012,
Shi:14,  Zhang:14, Zhang2015Truthful}.
For instance, in~\cite{Zhang:14,Zhang2015Truthful}, the authors propose Vickrey–Clarke–Groves (VCG) mechanism~\cite{Nisan2001}
in auctions where only the users submit bids to the auctioneer,
achieving truthfulness and a tradeoff between maximal social welfare and computational efficiency;
Zheng et al.~\cite{Zheng2014Star} propose a variant of the McAfee mechanism 
to tackle a similar problem in a double auction (providers also submit bids),
where they achieve truthfulness and budget balance.
To the best of our knowledge, none of these works addressed these problems
in the absence of a trusted auctioneer.

The problem of simulating the behaviour of a trusted entity
in an environment with only rational players has been approached
in the literature of distributed systems\,\cite{Halpern:04,Abraham:06,Abraham:13,Afek:14}.
In~\cite{Halpern:04,Abraham:06}, the authors addressed the
particular problems of secret sharing and multiparty computation
assuming the existence of a trusted mediator, and then studied
conditions under which it is possible to simulate the mediator
through a distributed protocol. Abraham et al.\,\cite{Abraham:13}
devised $k$-resilient equilibria solutions for the problem of leader election.
Afek et al.\,\cite{Afek:14} proposed a building blocks approach
for devising distributed $k$-resilient implementations, and
used this approach in combination with ideas from~\cite{Abraham:13}
to address the problems of consensus and renaming.
None of these works devised distributed protocols for simulating 
the role of an auctioneer in an auction.


\section{System Model}
\label{sec:model}
We define the family of resource allocation auctions, the requirements of 
a distributed simulation of the auctioneer, and the Game Theoretical model
used to analyse simulations.

\subsection{Resource Allocation Auctions}
Consider a family of auctions with $m$ \emph{providers},
$n$ \emph{users}, and an \emph{auctioneer}.
Providers sell multiple resources with a limited capacity,
in exchange for payments in some currency. Users are willing to pay
to the providers in exchange for the allocation of a minimum amount 
of each resource in the provider.
The auctioneer defines an allocation between users and providers
that is \emph{feasible} (i.e., that does not exceed 
the capacity of each resource in any provider)
and defines the payments to be made/received by the users/providers, respectively.
Both users and providers attribute a utility to each allocation,
which is a function of the value given to the allocation and the payments made/received.
More precisely, each user $i$ has a valuation $v_i$ 
specifying how much $i$ is willing to pay for the allocation
of a unit of each resource in each provider;
$i$'s utility is the difference between the total value
attributed by $i$ to the allocation and the payments made by $i$.
On the other hand, the valuation $v_j$ of a provider $j$
specifies how much $j$ wants to be paid for allocating a unit of each resource;
$j$'s utility is the difference between the payments received by $j$
and the total value attributed by $j$ to the allocation.

We will analyse two types of auctions: \emph{standard} and \emph{double}
that differ only on who are the bidders (entities that submit bids).
In a standard auction, only the users are bidders.
Each user $i$ submits a bid $b_i$ to the auctioneer declaring $v_i$.
Then, the auctioneer executes an algorithm $\A$ that
returns a feasible allocation and the respective payments.
The algorithm $\A$ must satisfy three properties: (1) it must maximise in expectation 
the \emph{social welfare}, defined as the total value attributed by users to the allocation,
(2) it must achieve \emph{truthfulness in expectation}, i.e.,
no user may increase its expected utility by lying in its bid;
and (3) it must be \emph{computationally efficient}.
In a double auction, the auctioneer collects bids from both the users and the providers.
The social welfare is now the
difference between the total value of the users and the total 
value of the providers. In addition to the above three
properties, $\A$ should also satisfy
\emph{budget balance}, which is the property that the total
value paid by users covers the total payments made to the providers.
Unfortunately, it was shown that no algorithm can simultaneously satisfy
truthfulness in expectation, maximal social welfare, and budget balance\,\cite{Myerson:83}.
In practice, it is common to aim at a combination between
truthfulness in expectation and either one of the other two properties.

\subsection{Distributed Auctioneer Simulation}
In our setting, no single entity can be trusted with the role of the auctioneer,
since every entity may increase its utility by manipulating the execution of $\A$.
Specifically, a provider may devise a sub-optimal schedule
that provides him with a higher payment; similarly, a bidder
may manipulate the execution of $\A$ to decrease his payment.
We address this problem by simulating the role of the auctioneer through a distributed protocol.
The idea is to replicate the execution of $\A$ in multiple entities
and use cross-validation of the results of the redundant computations.
Providers are especially suited for this purpose, since they may be willing
to offer their resources for the execution of $\A$ in exchange for payments.
Therefore, we focus on distributed protocols executed among sets
of providers that deviate from the protocol only if they gain by doing so.

Now, we specify requirements for a correct simulation of the auctioneer.
Normally, the auctioneer collects a vector $\vec{b}$
of bids and executes $\A$ with input $\vec{b}$.
In a simulation, each provider $j$ must collect a vector $\vec{b}^j$
of bids sent to $j$ and use it as input of a distributed protocol
that simulates $\A$. This requires bidders to submit a bid
to all providers. We consider that bidders may adopt arbitrary behaviours
such as submitting different bids to different providers or not submitting a bid.
Nevertheless, we assume that every provider $j$
eventually collects a vector $\vec{b}^j$ to be used as input in the simulation,
containing a bid for every bidder $i$, and if $i$ is correct,
then $j$ receives the bid of $i$ prior to the simulation.
In practice, bidders are expected to submit their bids by some deadline;
if a bidder fails to do so or sends an invalid bid,
then the provider may use the special value $\bot$ instead.
We want a simulation of the auctioneer to simulate $\A$ on some 
input $\vec{b}$ that contains at least the bids sent by correct bidders,
regardless of the bids of remaining bidders.

More precisely, let $\A(x,\vec{p} \mid \vec{b})$ be the probability
of algorithm $\A$ outputting an allocation $x$ and vector of payments $\vec{p}$,
when executed on input $\vec{b}$ by a trusted auctioneer.
We denote by $b_i^j$ be the bid submitted by
bidder $i$ to provider $j$ in a simulation.
Let $\vec{b}^j$ be the vector of all bids sent to $j$.
If $i$ does not submit a valid bid to $j$, then we take $b_i^j$
to be a neutral bid (i.e., a bid that excludes $i$ from the auction).
In a simulation of the auctioneer, each provider $j$ inputs $\vec{b}^j$
and outputs a pair $(x,\vec{p})$ composed by an allocation $x$
and a vector of payments $\vec{p}$, or outputs a special value $\bot$
that signals the abortion of the simulation.
We say that the outcome is $(x,\vec{p})$ if all providers output this pair,
otherwise, the outcome is $\bot$.
We assume that an external mechanism guarantees that
(1) when the outcome is $\bot$, the auction is aborted,
and (2) when the outcome is $(x,\vec{p})$, the allocation $x$
is enforced and all entities perform or receive their respective payments.
We can now provide a precise definition of correct simulation.

\begin{definition}
A simulation is said to be correct if and only if,
for all vectors $(\vec{b}^j)_j$, the outcome is $(x,\vec{p})$ with
probability $\A(x,\vec{p} \mid \vec{b})$, where $\vec{b}$
only contains valid bids and, for all bidders $i$ such that 
$b_i^j = b_i'$ for every provider $j$, we have $b_i = b_i'$.
\end{definition}

\subsection{Game Theoretical Model}
We consider the model of extensive form games played in 
asynchronous systems proposed in~\cite{Abraham:13}.
There are $m > 1$ players corresponding to the providers 
of the resource allocation auction.
Providers may form coalitions of size at most $k$. 
We assume that each provider has a unique identifier,
known to every other provider. Time is divided into turns.
In each turn, some provider $j$ is chosen to move:
$j$ first receives messages previously sent to $j$,
performs some computation, and sends messages.
A schedule specifies which provider moves at each turn
and which messages it receives.
We assume that communication channels are reliable,
so every message sent is eventually delivered.
We focus on schedules that are \emph{fair} in the sense
that every provider $j$ is scheduled to move infinitely
often, such that, for all turns $t$, there exists a turn $t'>t$
when $j$ is scheduled to move. This is necessary to ensure progress.

Now, we want to define a notion of equilibrium for this setting.
For this, we need an exact definition of protocol and utility.
A protocol specifies, for each schedule, a probability distribution over
the computation performed by each provider $j$ and the messages sent at each turn where $j$ moves,
as a function of the history of messages sent and received in previous turns.
In addition, since we analyse protocols as modules with input and output values,
the protocols also specify the values used as input and output by each provider.
The utility of providers is a function of the outcome of the simulation:
if the outcome is $\bot$, then the utility is $0$, else the
utility is the difference between the payments received and the value of the allocation.
Given this, the utility of a user $i$ is also $0$ if the outcome is $\bot$,
or is the difference between the value of the allocation and the payments made.
The expected utility conditioned on the schedule is
computed according to the probability distribution over outcomes induced by the protocol.

For the definition of equilibrium, we consider the notion of $k$-resilient 
(ex post) equilibrium introduced in~\cite{Abraham:13}.
This notion is a refinement of Nash equilibrium that incorporates
collusion and asynchrony. Namely, collusion is modelled as sets $K$
of at most $k$ providers that coordinate on any joint protocol;
we require that no provider in $K$ can increase its expected utility
if providers in $K$ deviate from the specified protocol,
given that other providers do not deviate.
Asynchrony is modelled in an ex post way by assuming that providers 
are informed about the schedule when computing the expected utility.
This is the strongest requirement for this setting.

\begin{definition}
A protocol $P$ is a $k$-resilient (ex post) equilibrium if and only if
for all fair schedules and coalitions $K$ such that $|K| \le k$,
there is no provider in $K$ that increases its expected utility
when providers in $K$ follow a joint protocol $P' \ne P$,
given that providers not in $K$ follow $P$.
\end{definition}

An important aspect of this definition is that a $k$-resilient
equilibrium protocol $P$ satisfies the property that no provider increases
its expected utility by lying about its input, hence $P$
achieves \emph{truthfulness} regarding the inputs of the providers to a simulation.
Specifically, at every invocation, we say that provider $j$ has input $v$
if, by following $P$, $j$ is expected to input $v$, so truthfulness implies that $j$ 
does not input $v' \ne v$.
By the truthfulness of $\A$, every bidder $i$ maximises its expected utility
by sending $b_i = v_i$ to all providers, regardless of other bids.
This implies that to fulfil our goals it suffices to devise
protocols that are $k$-resilient and correctly simulate the auctioneer.


\section{The Distributed Auctioneer}
\label{sec:design}

We propose a framework for devising distributed 
protocols executed by the providers that correctly simulate the auctioneer.
The framework is sufficiently general to simulate different auctions.
To illustrate its applicability, we provide two implementations 
of the framework for standard and double 
bandwidth allocation auctions, respectively.
We describe the framework in two steps. First, we provide a general definition
where we do not specify the details about how to implement the simulation
of the algorithm $\A$. Then, we describe how to simulate $\A$
by leveraging parallelism to speed up its execution.

\subsection{General Framework}
The input of the framework at each provider $j$ is a vector $\vec{b}^j$
of bids submitted to $j$ and the output is either a pair $(x,\vec{p})$ containing
an allocation $x$ and a vector of payments $\vec{p}$ or the special value $\bot$.
As illustrated in Figure~\ref{fig:framework}, the framework chains the execution
of two building blocks: \emph{bid agreement} and \emph{allocator}.
Each provider $j$ inputs $\vec{b}^j$ to the bid agreement,
which outputs either a vector $\vec{b}$ or $\bot$.
In the former case, $j$ inputs $\vec{b}$ to the allocator.
If all providers follow the protocol, then the bid agreement ensures
that they all output some vector $\vec{b}$ containing all valid bids,
and the allocator ensures that they all output a pair $(x,\vec{p})$
with probability determined by $\A$.

\begin{figure}[tbp]
	\centering
	\includegraphics[width=0.5\columnwidth,keepaspectratio]{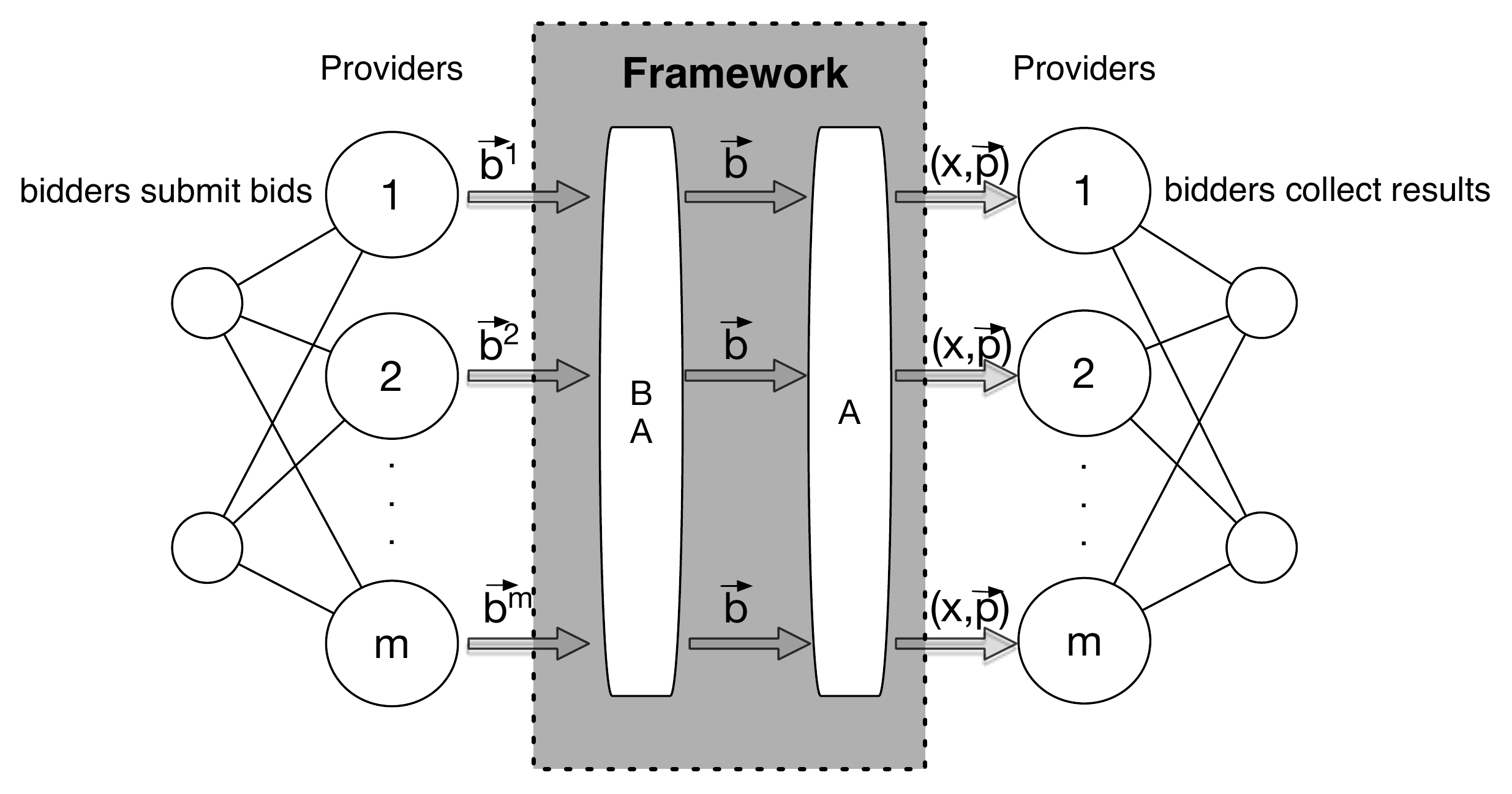}
	\caption{Framework: Bid Agreement (BA) and Allocator (A)}
	\label{fig:framework}
\end{figure}

In the following paragraphs, we describe each block 
in more detail by defining properties that must be satisfied 
by any implementation of the block, and then show in the analysis that
every implementation of the framework is $k$-resilient and correctly simulates
the auctioneer based only on the properties of the blocks. This makes
the proof independent from the actual implementation.
In all blocks, an implementation $P$ must satisfy
the property of \emph{$k$-resiliency for solution preference}, i.e.,
$P$ must be a $k$-resilient equilibrium, under
the assumption that players have preference for a solution 
and number of agents not in the same coalition is sufficiently high.
Specifically, the output of every block is either some valid value or $\bot$.
We can split the set of outcomes of the block (combinations of outputs)
into the set $A$ of solutions where all providers output the same valid value
and the set $B$ of remaining outcomes. 
In a correct execution, we want the outcome to lie in $A$.
To ensure this and that the protocol is a $k$-resilient equilibrium,
we need to assume that providers obtain a higher utility for outcomes in $A$
than for outcomes in $B$ (preference for a solution), and $m > f(k)$ for
some function $f$ defined for every $k > 0$.
The assumption of preference for a solution of the framework
is equivalent to providers preferring to receive the payments.

\subsubsection*{Bid Agreement}
The input at provider $j$ is the vector $\vec{b}^j$ of bids sent to $j$.
The output is a vector $\vec{b}$ or the special value $\bot$.
In addition to $k$-resiliency for solution preference,
this block must ensure two conditions when all providers follow the protocol: 
(1) \emph{eventual agreement}, defined as
all providers eventually outputting the same vector $\vec{b}$,
and (2) \emph{validity}, defined as,
for every bidder $i$ that submits the same bid $b_i'$
to all providers, the output at every provider is $b_i = b_i'$.

\begin{property}
\label{prop:bid-agreement}
A protocol $P$ implements bid agreement if and only if it satisfies two conditions:
(1) if all providers follow $P$, then $P$ satisfies eventual agreement and validity;
and (2) $k$-resiliency for solution preference.
\end{property}

If we can assume that the bids of malicious bidders are obtained from a finite
set of values and are equally likely, then 
a suitable approach is to use the rational consensus protocol proposed in~\cite{Afek:14},
which has inputs $\{0,1\}$ and outputs in $\{0,1,\bot\}$,
and satisfies the following two properties:
(a) if all providers follow the protocol, then all providers eventually
output the same bit, which is input by some provider;
and (b) $k$-resiliency for solution preference, assuming $m > 2k$
and that the input of every provider not in the same coalition
is either the same value or is $0$ or $1$ with equal probability.
This protocol can be used to implement the bid agreement as follows.
For each bidder $i$, provider $j$ generates a stream of bits uniquely
determined from $b_i^j$ and inputs each bit to a rational consensus instance;
if some instance outputs $\bot$, then $j$ outputs $\bot$,
otherwise, $j$ converts the stream to a bid $b_i$ and outputs a bid $b_i^*$,
where $b_i^* = b_i$ if $b_i$ is valid, or $b_i^*$ is some pre-determined valid bid otherwise.
To distinguish between different instances of rational consensus,
providers may append to the messages of each instance
the identifier of each bidder and the position of each bit.
Clearly, providers only output valid bids or the value $\bot$.
By (a), if all providers follow the protocol, then
eventual agreement and validity hold, showing (1).
Condition~(2) follows directly from (b) and $m>2k$ if
the input of every provider satisfies the assumptions of (b).
To see why these assumptions are true, notice
that, for each bidder $i$, if $i$ is not malicious, 
the input of all providers not in the same coalition is $i$'s true bid,
and if $i$ is malicious, then the bid $b_i^j$ sent by $i$ to $j$ is uniformly
distributed. If the set of possible bids is the set of all integers, 
then the stream of bits obtained from $b_i^j$ is also random.
These are reasonable assumptions, 
since we expect the behavior of malicious bidders to be arbitrary.

\subsubsection*{Allocator} The input at every provider is a vector $\vec{b}$ of bids,
and the output is either a pair $(x,\vec{p})$ or $\bot$.
We want the allocator to satisfy four conditions. First, we want the allocator
to correctly simulate $\A$, i.e., given that
all providers input the same vector $\vec{b}$ and follow the protocol,
every provider must eventually output pair $(x,\vec{p})$
with probability $\A(x,\vec{p} \mid \vec{b})$.
Second, we want resilience to collusive influences,
defined as, for all coalitions $K$ of at most $k$ elements,
if all providers not in $K$ input $\vec{b}$
and follow the protocol, then no $j \notin K$ outputs a pair $(x,\vec{p})$
with probability higher than $\A(x,\vec{p} \mid \vec{b})$,
regardless of the protocol followed by providers in $K$.
Intuitively, no coalition $K$ can influence the output of providers not in $K$,
except that they may output $\bot$ with higher probability.
Third, we want input validation to ensure that
providers have preference for solutions at the bid agreement.
More precisely, if two providers input different vectors
and follow the protocol, then they both output $\bot$,
regardless of the protocol followed by other providers.
Finally, we want $k$-resilience for solution preference
given that all providers have the same input.

\begin{property}
\label{prop:allocator}
A protocol $P$ implements the allocator if and only if it satisfies four conditions:
(1) correct simulation of $\A$; (2) resilience to collusive influence;
(3) input validation; and (4) $k$-resiliency for solution preference
if all providers have the same input.
\end{property}

We discuss implementations of the allocator in Section~\ref{sec:implementation}.

\subsubsection*{Analysis} We show in Theorem~\ref{theorem:simulation}
that a protocol that implements our framework correctly
simulates the auctioneer and is $k$-resilient.

\begin{theorem}
\label{theorem:simulation}
For every protocol $P$ that implements the framework,
$P$ correctly simulates the auctioneer,
and there exists a function $f$ such that, if $m > f(k)$,
then $P$ is a $k$-resilient equilibrium.
\end{theorem}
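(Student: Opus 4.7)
The plan is to prove the two assertions separately, exploiting the modular structure of the framework by appealing only to the abstract properties of the two building blocks stated in Properties~\ref{prop:bid-agreement} and~\ref{prop:allocator}.

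For the correct-simulation part, I would assume every provider follows $P$ and trace outputs through the chain. By eventual agreement of the bid agreement block, all providers eventually output a common vector $\vec{b}$, and by validity, whenever a bidder $i$ sends the same bid $b_i'$ to every provider we have $b_i = b_i'$ inside $\vec{b}$. Because the bid agreement outputs only valid bids, $\vec{b}$ satisfies the requirements placed on the witness vector in the definition of correct simulation. Feeding the common $\vec{b}$ to the allocator and invoking its correct-simulation property shows that the outcome is $(x,\vec{p})$ with exactly the probability $\A(x,\vec{p} \mid \vec{b})$ demanded by the definition.

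For the $k$-resilience part, I would fix a fair schedule and a coalition $K$ with $|K| \le k$ using some joint protocol $P' \ne P$, and argue by a backward-style case analysis over the two stages. At the allocator stage, provisionally assume the bid agreement has produced a common vector $\vec{b}$ at all providers outside $K$. If any provider in $K$ feeds a different value to the allocator, input validation forces $\bot$ and everyone in $K$ obtains utility $0$. If the coalition feeds $\vec{b}$, then by resilience to collusive influence no honest provider outputs any pair $(x,\vec{p})$ with probability exceeding $\A(x,\vec{p} \mid \vec{b})$, so the probability of each solution outcome is bounded by the truthful distribution, with the remaining mass sitting on $\bot$. Since every solution gives non-negative utility to providers (who prefer receiving payments) while $\bot$ yields $0$, this per-outcome bound, combined with the allocator's $k$-resiliency for solution preference, rules out any improvement in expected utility for members of $K$.

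Moving back to the bid agreement stage, the above analysis shows that the continuation utility of $K$ equals the allocator's truthful expected utility whenever the block produces a common valid vector and collapses to $0$ otherwise. Hence any common solution at the bid agreement is strictly preferred to any non-solution, which is exactly the solution-preference hypothesis of Property~\ref{prop:bid-agreement}, and its $k$-resiliency directly forbids profitable deviations there. Taking $f(k)$ to be the maximum of the lower bounds on $m$ demanded by the two blocks (for instance $f(k) = 2k$ when the bid agreement is instantiated with the rational consensus of~\cite{Afek:14}) yields the claimed bound on $m$. The main obstacle will be the careful composition of the two $k$-resilience guarantees: each block is only known to resist deviations under the abstract hypothesis that solutions are preferred to non-solutions, whereas the true utilities depend in a fine-grained way on the specific vector $\vec{b}$ and the specific pair $(x,\vec{p})$. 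The key observation that closes this gap is that resilience to collusive influence at the allocator uniformly upper-bounds each outcome's probability by $\A(x,\vec{p} \mid \vec{b})$, so no steering of the bid agreement toward one valid solution over another can improve the coalition's expected utility, which reduces the composed game to one in which the solution-preference abstraction genuinely applies at each stage.
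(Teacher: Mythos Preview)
Your proposal is correct and follows essentially the same approach as the paper: both establish correct simulation by chaining eventual agreement and validity of the bid agreement with condition~(1) of Property~\ref{prop:allocator}, and both establish $k$-resilience by a backward argument that first verifies solution preference at each block (using input validation and resilience to collusive influence to show that non-solutions or mismatched inputs collapse to~$\bot$ while deviations cannot raise the probability of any non-$\bot$ outcome) and then invokes the $k$-resilience-for-solution-preference guarantee of each block, with $f(k)$ taken to dominate the thresholds required by Properties~\ref{prop:bid-agreement} and~\ref{prop:allocator}. Your final paragraph makes the compositional subtlety slightly more explicit than the paper does, but the underlying argument is the same.
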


\begin{proof}
First, we show that $P$ correctly simulates the auctioneer.
Every provider $j$ inputs $\vec{b}^j$ to the bid agreement.
By (1) of Property~\ref{prop:bid-agreement}, regardless of the inputs,
all providers output the same vector $\vec{b}$
that satisfies validity. By (1) of Property~\ref{prop:allocator},
the outcome of the simulation is pair $(x,\vec{p})$
with probability $\A(x,\vec{p} \mid \vec{b})$.
This concludes the first step of the proof.

Now, we show that $P$ is a $k$-resilient equilibrium for $m > f(k)$ for some $f$.
Fix a coalition $K$. We take $f$ to be larger for all $k$ than the minimum value of $m$
required by Properties~\ref{prop:bid-agreement} and~\ref{prop:allocator}.
These properties imply that, if providers have preference for a solution
at the bid agreement, then the implementations of bid agreement is $k$-resilient,
so providers in $K$ prefer to follow $P$ for bid agreement.
Since this guarantees that all providers have the same input at the
allocator, the implementation of the allocator is also $k$-resilient,
implying that $P$ is $k$-resilient.

Now, we show that solution preference holds for both blocks.
Recall that the outcome is not $\bot$ only if
providers not in $K$ output the same pair,
and, if the outcome is $\bot$, then the utility is $0$.
Hence, providers in $K$ prefer (obtain an expected utility at least as high)
that providers not in $K$ output the same pair $(x,\vec{p})$;
in this case, they clearly prefer to output $(x,\vec{p})$ as well, thus they have preference
for a solution at the allocator. Now, consider the bid agreement.
The utility of an outcome of this block is the expected
utility given that providers not in $K$ follow $P$ and providers in $K$
follow an arbitrary protocol. Clearly, providers in $K$
prefer that no provider in $K$ outputs $\bot$.
By (3) of Property~\ref{prop:allocator}, providers in $K$
prefer that all providers not in $K$ output the same vector.
By (2) of Property~\ref{prop:allocator},
providers in $K$ cannot increase the probability of any outcome of the framework 
other than $\bot$ by deviating, thus, they cannot increase their expected utility by outputting
a vector $\vec{b}' \neq \vec{b}$ at the bid agreement.
This shows that providers have preference for a 
solution at the bid agreement, concluding the proof.
\end{proof}

\subsection{Parallel Allocator Framework}
\label{sec:implementation}

We describe a framework for implementations of
the allocator that satisfy Property~\ref{prop:allocator}.
We explore the possibility of parallelising the execution of $\A$ in multiple providers.
Although this approach introduces the overhead of communication between providers,
since $\A$ is often computationally intensive,
its parallelisation compensates for this overhead.

The framework consists in an initial invocation of a 
building block for \emph{input validation} followed
by the simulation of $\A$, which invokes two additional building blocks:
\emph{data transfer} and \emph{common coin}.
The input is a vector of bids and the output is either $\bot$ or a pair $(x,\vec{p})$.
At the invocation of each block, providers either output a valid value
or $\bot$; in the latter case, they output $\bot$ at the allocator.
To describe the simulation of $\A$,
it is useful to characterise the execution of $\A$
in terms of a graph of tasks, where nodes correspond to
tasks to be executed in sequence and edges 
represent data dependencies. This graph establishes a partial order
of tasks; every two tasks that are not ordered can be executed in parallel
by different providers. Figure~\ref{fig:taskdecomposition}
gives an example of a graph of $4$ tasks,
where tasks T2.1 and T2.2 can be executed in parallel.
To cope with collusion, each task $T$ is assigned to a set $S$ of at least $k+1$ providers.
If a task $T'$ is to be executed by a set $O \neq S$ of providers and $T'$ depends on
the result of $T$, then the providers of $S$ transfer data to the providers of $O$
using the data transfer building block. In a correct simulation of $\A$,
there must be one final task that depends on all other tasks,
where all providers gather all the required data to produce the final output.
Whenever providers need a random number distributed according to a 
probability distribution $\Pi$, they invoke the common coin with input $\Pi$.
Figure~\ref{fig:parallel-allocator} illustrates the framework
for the task decomposition of Figure~\ref{fig:taskdecomposition}.

\begin{figure}[tbp]
	\centering
	\includegraphics[width=0.4\columnwidth,keepaspectratio]{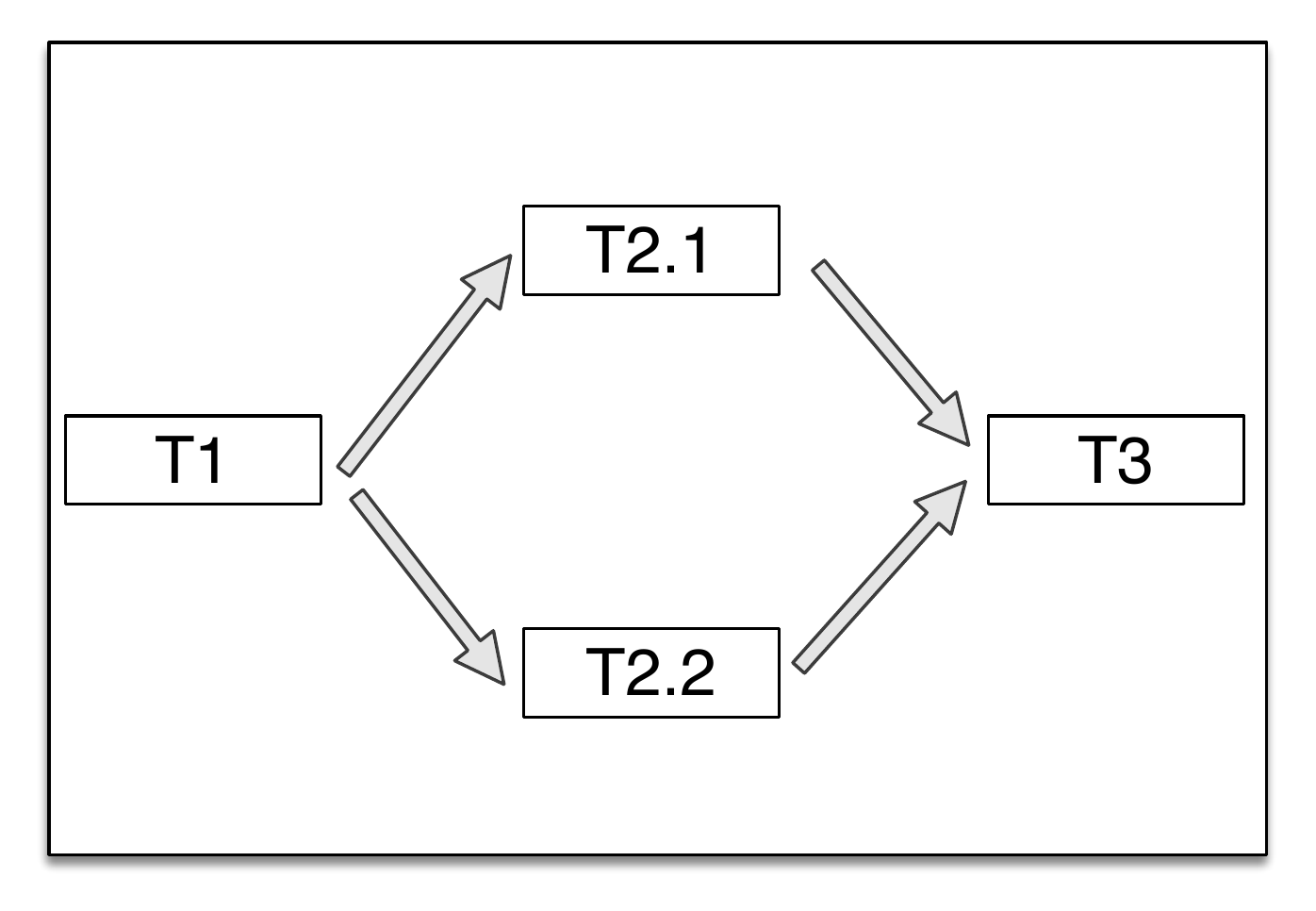}
	\caption{Decomposition of the Allocator into Tasks}
	\label{fig:taskdecomposition}
\end{figure}

As in the previous section, we describe properties that must be satisfied
by the implementations of each block and then show that every implementation
of this framework satisfies Property~\ref{prop:allocator}.

\subsubsection*{Input Validation}
The input is a vector $\vec{b}$ and the output is either
$\bot$ or $\vec{b}$. We want an implementation to satisfy 
$k$-resiliency for solution preference and that all providers 
eventually output $\vec{b}$ given that they all input $\vec{b}$,
and we need to satisfy (3) from Property~\ref{prop:allocator}.

\begin{property}
\label{prop:iv}
An implementation $P$ of the input validation must satisfy three conditions:
(1) if two providers follow $P$ and have different inputs, then they eventually output $\bot$;
(2) if all providers follow $P$ with the same input $\vec{b}$, then they eventually output $\vec{b}$;
and (3) $k$-resiliency for solution preference if all providers have the same input.
\end{property}

A simple implementation is to have providers broadcasting their vectors of bids
and outputting $\bot$ when two different vectors are detected.
This clearly satisfies (1) and (2),
whereas (3) is immediately true if providers have preference for a solution and $m > k$.

\subsubsection*{Common Coin}
The input is a probability distribution $\Pi$
and the output is either $\bot$ or a number distributed according to $\Pi$.
Given that all providers have the same input, we 
want the common coin to satisfy $k$-resilience for solution preference 
and to output the same random number.

\begin{property}
\label{prop:common-coin}
Given that all providers have input $\Pi$,
an implementation $P$ of the common coin must satisfy two conditions:
(1) if all providers follow $P$, then they eventually output 
the same value distributed according to $\Pi$;
and (2) $k$-resiliency for solution preference.
\end{property}

\begin{figure}[tbp]
	\centering
	\includegraphics[width=0.5\columnwidth,keepaspectratio]{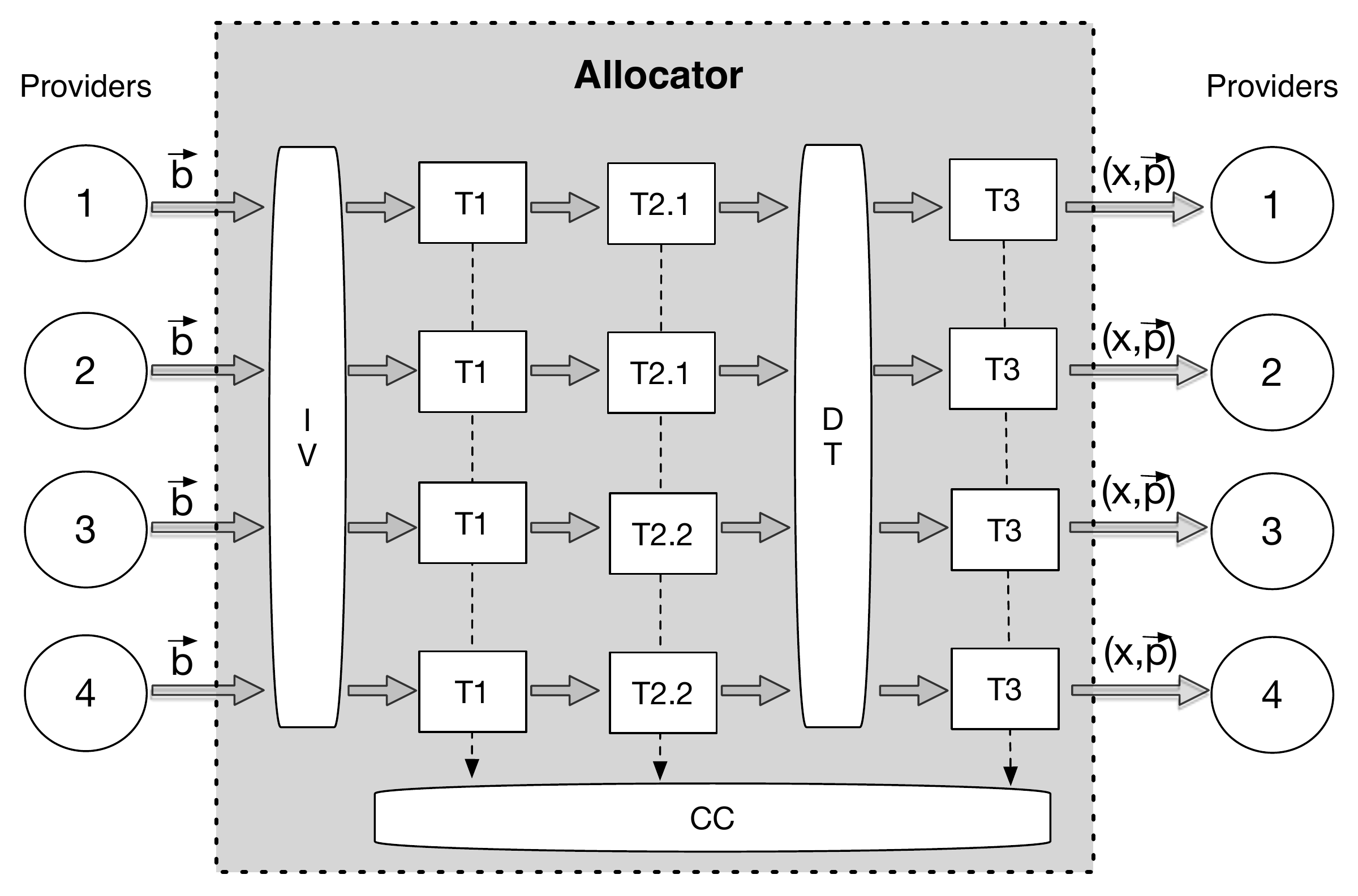}
	\caption{Parallel Allocator: Input Validation (IV), Data Transfer (DT), and Common Coin (CC)}
	\label{fig:parallel-allocator}
\end{figure}

A possible implementation of the shared coin is the protocol from~\cite{Abraham:13}.
The idea is that every provider $j$ commits to a random number $r_j \in [0,1]$,
before learning the random numbers of every other provider not in its coalition.
Then, providers reveal all random numbers and compute the output by summing all numbers modulo $1$.
If some provider $j$ sees a number not in $[0,1]$ or some provider does not send
a random value compatible with its commitment, then it outputs $\bot$.
Otherwise, $j$ applies a transformation on the computed value, which is uniformly distributed in $[0,1]$,
to produce an output that is distributed according to $\Pi$.

It is clear that all providers output the same random number distributed according to the common input $\Pi$
if they follow the protocol. Assuming that $m > k$, no provider $j$ can manipulate
the probability distribution of the output by not committing to $r_j$ selected at random
without some provider outputting $\bot$, even if $j$ is in a coalition of at most $k$ providers.
Therefore, the protocol satisfies $k$-resiliency for solution preference.

\subsubsection*{Data Transfer} A set $S$ of providers inputs a value from a domain $D$.
Providers from a set $O$ either output a value from $D$ or $\bot$.
When all providers in $S$ have the same input,
we want them to output the same value in $D$ when they follow the protocol.
We only require an implementation to be $k$-resilient if $|S|,|O| > k$,
since otherwise coalitions can always manipulate the output of this block.

\begin{property}
\label{prop:data-transfer}
Given that $|S|,|O| > k$ and all providers have the same input $v$,
an implementation $P$ of the data transfer must satisfy two conditions: 
(1) if all providers follow $P$, then they eventually output $v$;
and (2) $k$-resiliency for solution preference.
\end{property}

We propose a simple $k$-resilient implementation of this block,
where providers in $S$ broadcast their input to all providers in $O$.
In the end, if some provider $j \in O$ detects two different values,
then $j$ outputs $\bot$. Given that all providers
have input $v$ and that $|S|,|O| > k$,
they eventually output $v$, and
no coalition $K$ of up to $k$ providers
can cause all providers to output $v' \notin \{v,\bot\}$.
By solution preference, no provider in $K$ gains if someone lies
about the input $v$ or omits a message.

\subsubsection*{Analysis}
Theorem~\ref{theorem:allocator} shows that every implementation
of the above framework satisfies the four conditions of Property~\ref{prop:allocator}.

\begin{theorem}
\label{theorem:allocator}
Every protocol $P$ that implements the parallel allocator
satisfies Property~\ref{prop:allocator}.
\end{theorem}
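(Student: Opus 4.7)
The plan is to verify the four conditions of Property~\ref{prop:allocator} in turn, exploiting the block structure of the parallel allocator and the assumption that every task of the graph is assigned to a set of at least $k+1$ providers.

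First I would dispatch condition (3), input validation: if two providers have different inputs at the allocator, then their inputs to the input validation block also differ, so by (1) of Property~\ref{prop:iv} both output $\bot$ there, which by construction propagates to $\bot$ at the allocator, regardless of what any other provider does. Next I would establish condition (1) by assuming that every provider follows $P$ with the same input $\vec{b}$. By (2) of Property~\ref{prop:iv} the input validation outputs $\vec{b}$ at every provider. I would then prove by induction on the task graph that, at every node, the providers assigned to that task hold exactly the values that the honest centralised execution of $\A$ would compute at that point, where the random choices are drawn from the outputs of the common coin. The inductive step uses (1) of Property~\ref{prop:data-transfer} to carry values between task sets (each of size $\ge k+1>0$, hence nonempty) and (1) of Property~\ref{prop:common-coin} to supply random draws distributed according to the requested $\Pi$. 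Aggregating at the final task yields the pair $(x,\vec{p})$ with probability exactly $\A(x,\vec{p} \mid \vec{b})$.

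Condition (2), resilience to collusive influence, is the main obstacle and where I would spend most of the argument. Fix a coalition $K$ with $|K| \le k$ and assume every $j \notin K$ inputs $\vec{b}$ and follows $P$. The crucial structural observation is that, since every task is assigned to a set of at least $k+1$ providers, each task set $S$ contains at least one honest provider, and every data transfer invocation has $|S|,|O| > k$ so both Properties~\ref{prop:data-transfer} and \ref{prop:common-coin} apply. I would then induct on the task graph once more, showing that at each stage the honest providers either hold the value that $\A$ would compute on $\vec{b}$ with the random draws produced by the common coin, or output $\bot$. The inductive step relies on two facts: (i) by the $k$-resilience of the common coin and data transfer for solution preference together with their specifications, a coalition of size $\le k$ can only force the block either to produce its honest output or to abort with $\bot$ at some honest provider; and (ii) once any honest provider outputs $\bot$ in a subcomputation, it outputs $\bot$ at the allocator. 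Combining these, for every pair $(x,\vec{p})$ the event that some $j \notin K$ outputs $(x,\vec{p})$ is contained in the event that $\A$ would have produced $(x,\vec{p})$ on input $\vec{b}$ with the sampled random coins, giving the required bound $\Pr[\text{output}=(x,\vec{p})] \le \A(x,\vec{p} \mid \vec{b})$.

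Finally, for condition (4), I would assume all providers have the same input $\vec{b}$ and preference for a solution at the allocator, and argue compositionally. Solution preference at the allocator means each provider in $K$ strictly prefers that all providers not in $K$ output the same valid pair over any outcome containing a $\bot$. This preference propagates to each internal block: deviating in any invocation of input validation, data transfer, or common coin that causes honest providers to output $\bot$ there forces $\bot$ at the allocator, which by assumption is strictly worse. Hence preference for a solution holds at every block, so by (3) of Property~\ref{prop:iv} and (2) of Properties~\ref{prop:data-transfer} and \ref{prop:common-coin}, each block is $k$-resilient, and Property~\ref{prop:iv}(2) guarantees that after input validation all providers carry the same vector into the remainder of the protocol. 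Composing the per-block $k$-resilience with the correct-simulation statement just proved, no coalition of size at most $k$ can strictly improve any of its members' expected utility by deviating, which is condition (4) and completes the proof.
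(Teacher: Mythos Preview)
Your proposal is correct and follows essentially the same approach as the paper: both prove the four conditions of Property~\ref{prop:allocator} by induction over the task graph for (1) and (2), derive (3) directly from the input validation block, and obtain (4) by propagating solution preference to each sub-block and invoking their $k$-resilience. The only differences are cosmetic---you handle (3) first while the paper treats it after (2), and for (4) the paper frames the propagation of solution preference as an explicit backwards induction over block invocations whereas you argue it directly; also, your citation of condition (1) of Property~\ref{prop:iv} for input validation is the correct one (the paper's text inadvertently cites (2) there).
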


\begin{proof}
We show that $P$ ensures (1) correct simulation of $\A$; (2) resilience to collusive influence;
(3) input validation; and (4) $k$-resiliency for solution preference if all providers have the same input.
First, we show (1). Suppose that all providers input the same vector $\vec{b}$
and follow $P$. We show that every provider outputs the same pair $(x,\vec{p})$
with probability $\A(x,\vec{p} \mid \vec{b})$. We show using induction that,
if the decomposition of $\A$ into tasks is done correctly and we fix all random
numbers, then at every task $T$ every provider $j$ that executes $T$ has the same output
that he would have if $j$ executed $\A$ locally with the same random numbers.
This is true for the first task by (2) of Property~\ref{prop:iv}. In the inductive step,
the input at each task depends only on the output of a set of tasks.
For each of those tasks $T$, by the induction hypothesis,
a set $S$ of at least $k+1$ providers compute the 
same result and input it to the data transfer;
by (1) of Property~\ref{prop:data-transfer}, all providers that execute $T$
receive that value and perform the same computation as they would if they were executing $\A$.
This implies that all providers output the same pair at the end.
By (1) of Property~\ref{prop:common-coin}, at every invocation
of the common coin, all providers input the same distribution $\Pi$ and output the same random number
distributed according to $\Pi$, where $\Pi$ is specified by $\A$. This proves (1).

Now, we show (2). Fix a coalition $K$ and suppose that all providers 
not in $K$ follow $P$ with input $\vec{b}$ and providers in $K$ follow an arbitrary $P' \ne P$.
The only way that providers in $K$ could cause providers not in $K$
to return pair $(x,\vec{p})$ with probability higher than $\A(x,\vec{p} \mid \vec{b})$
is if the result of some task used in the input of another task or as the final output
is not distributed as specified by $\A$ and $\vec{b}$.
Since each task is executed by more than $k$ providers, 
using an identical reasoning to the proof of (1), we can show using induction
that providers in $K$ cannot manipulate the probability distribution over the results
of each task, except only by increasing the probability of some provider 
not in $K$ outputting $\bot$. Here, we use the fact that,
by (3) of Property~\ref{prop:iv} and
(2) of Properties~\ref{prop:common-coin},~\ref{prop:data-transfer},
providers in $K$ cannot manipulate the probability distribution over
outputs of the building blocks in a way that increases
the expected utility of some provider in $K$. This proves (2).

Condition (3) follows by (2) of Property~\ref{prop:iv}.
To show (4), we first need to prove that providers have preference for a solution at all invocations
of building blocks, assuming that they have preference for a solution of the allocator.
Fix a coalition $K$. It is clear that providers in $K$ prefer that providers not in $K$
do not output $\bot$ at all invocations. Now, we use backwards induction to show
that they prefer that providers not in $K$ never return different values.
In the last invocation, this is clearly true by preference for a solution of the allocator.
Continuing backwards, if two providers not in $K$ output different values
at the same invocation of some block, then either they output different pairs at the end or 
input different values at the following invocation of the data transfer,
which by the hypothesis is never preferable to outputting the same value at the considered invocation.
By the proof of (2), providers in $K$ cannot manipulate the final outcome
by not outputting the same values at all invocations, so
they also prefer to output the same values as providers not in $K$,
showing solution preference at all invocations. This also shows that
providers prefer to have the same input at all invocations.
Thus, given that all providers have the same input,
no provider in $K$ can increase its expected utility
if some provider $j \in K$ does not compute each task correctly.
By (3) of Property~\ref{prop:iv} and 
(2) of Properties~\ref{prop:common-coin} and~\ref{prop:data-transfer},
$P$ is a $k$-resilient equilibrium.
\end{proof}


\section{Case Study}
\label{sec:instances}

In this paper, we use community networks as a concrete example of one of the many application areas for the distributed simulation of a trusted auctioneer.

\subsection{Community Networks}
Community networks are a bottom-up social initiative that provide Internet and networking services using off-the-shelf network hardware and the open, unlicensed wireless spectrum,
with optical fibre in some cases,
based on self-servicing and self-management by the users~\cite{Selimi2015Cloud}.  
Many of such initiatives have proven quite successful, and 
	include AWMN in Greece, FunkFeuer in Austria, Freifunk in Germany, Ninux in Italy,
and Guifi.net which is one of the largest community networks in the wold and connects more than 28,000 locations.

Community networks are based on the principle of reciprocal sharing and most of their users are moved by altruistic principles. 
However, as any other human organisation, these networks are not immune to overuse, free-riding, or under-provisioning, specially in scenarios where users may have motivations to compete for scarce resources. 
In this paper, we consider the concrete problem of bandwidth reservation on the gateways that connect the community network to the Internet. 
In particular, we consider that, as in most community networks, the subset of users that offer gateway services to the Internet is smaller than the complete set of users. 
Users that are not at the gateways may be interested in reserving bandwidth in these gateways for Internet access.
If the available bandwidth at each gateway is not enough to satisfy the demand, 
one needs to implement some arbitration mechanism that optimally and fairly allocate resources, and at the same time maximises the resource utilisation and the provider's revenue. 

In the context of our model (\S~\ref{sec:model}), \emph{providers} are the owners of the gateways, and have direct access to the Internet,
while \emph{bidders} have no direct access to the Internet and rely on these gateways.
The role of \emph{auctioneer} can be taken by one of the providers, or a chosen third party,
or in the case of our proposed distributed auctioneer, simulated by some providers in a distributed fashion.
We consider \emph{resource} to be the bandwidth external to the network available at the gateways.
Note that we are not considering the bandwidth on the links internal to the community network,
because network is operated in a shared manner, 
and this bandwidth is collectively owned by the community.

\subsection{Auctions For Bandwidth Allocation}

We now show how our framework can be applied to two different bandwidth allocation problems in the context of community networks. For that purpose, we resort to two different 
algorithms that have been proposed in the literature to solve bandwidth allocation of bidders in providers. These 
algorithms rely on standard and double auctions respectively, and have different computational properties: the double auction algorithm 
provides an example of a graph with only one task that is not computationally intensive, such that decomposing its execution into 
parallel tasks does not provide a performance gain;  the standard auction algorithm provides a graph with multiple computationally 
intensive tasks that can be parallelised. Later in the paper, we will use these examples to evaluate the performance of implementations
of the framework. We use the double auctions example to measure a worst-case overhead of executing all building blocks of the framework 
compared to an execution with a centralised trusted auctioneer, and we use the standard auctions example to show that the improvements of
parallelisation can outweigh the added overhead when the execution time is dominated by computation.

\subsubsection{Double Auction}
\label{sec:instances-double-auction}

Consider an auction where each provider has a limited bandwidth to be allocated to multiple users,
and each user has a demand of bandwidth that may be satisfied by multiple providers.
Both the users and the providers declare in their bids the value given to a unit
of allocated bandwidth. An allocation gives the amount of bandwidth of each user
allocated in each provider. We want to ensure truthfulness in expectation and budget balance.
For this purpose, we use the algorithm $\A$ of~\cite{Zheng2014Star},
which provides the above properties at the expense
of social welfare. The idea is to order the providers by increasing
value and to order the users by decreasing value.
Then, users are allocated by their order to the providers using the water-filling method:
the maximum amount of bandwidth of each user is allocated
to the first available provider without exceeding its capacity,
and any unsatisfied demand of that user is allocated to the following providers using the same method.
Since the most computationally intensive task of this algorithm
is sorting, in most practical settings there is no performance gain in parallelising the execution of $\A$.
Instead, every provider executes $\A$ locally and outputs the result.
Hence, we never need to invoke the data transfer building block.

\subsubsection{Standard Auction}
\label{sec:instances-vcg}

\begin{algorithm}[tbp]
	\caption{Standard auction allocator}
	\label{alg:bandwidth-allocation-randomised}
	\small

	\begin{algorithmic}[1]
		\State Task 1: Calculate the allocation solution $x$
		\For {Each subset $S$ of bidders in parallel}
			\State Task 2.$S$: calculate payment $p_j$ of every $j \in S$
		\EndFor
		\State Task 3: Collect the outputs of each task with the data transfer and output $(x,\vec{p})$
	\end{algorithmic}

\end{algorithm}

Consider a variation of the double auction where providers
do not send bids and each bidder can only have its bandwidth demand
allocated in a single provider. Here, we aim for truthfulness
in expectation, maximal social welfare, and computational efficiency.
It is well known that a VCG mechanism can be used to provide the first two guarantees.
The difficulty is that determining the maximal
social welfare is in general an NP Hard problem,
which conflicts with the goal of computational efficiency.
To address this issue, we use the algorithm of~\cite{Zhang2015Truthful}
which adapts the VCG mechanism to achieve a tradeoff
between the two conflicting requirements. Specifically,
\cite{Zhang2015Truthful}~offers a $(1-\epsilon)$ approximation
of maximal social welfare for an arbitrarily small $\epsilon$,
while terminating in polynomial time according to smoothed analysis.

Interestingly, the randomised algorithm proposed
in~\cite{Zhang2015Truthful} has the potential for parallelisation.
In a course manner, the algorithm can be divided into three steps, depicted
in Algorithm\,\ref{alg:bandwidth-allocation-randomised}. The first
step derives an approximately optimal allocation of users to providers.
This step is hard to parallelise effectively in a distributed system,
so we run it in a single sequential task.
The second step calculates the payments for each user based on the result of the first step.
This step is computationally intensive and the payments for each user can be computed
independently and, therefore, can be easily parallelised.
The final step gathers all intermediate results to produce the output.
In our implementation, the first and third steps are executed by all providers.
In the second step, we group the providers into $c$ groups, each containing at least $k+1$
providers. Each group is assigned the computation of the payments
of a subset of $n/c$ users. Then, all providers of a group
execute the data transfer block to transfer the resulting payments to all providers.


\section{Performance Evaluation}
\label{sec:evaluation}

We have evaluated the implementations of the allocator for double and standard auctions proposed in Section~\ref{sec:instances}.
The implementations of all the remaining blocks are as suggested in Section~\ref{sec:implementation}: we use the rational consensus algorithm proposed in~\cite{Afek:14} in the implementation of the bid agreement, the input validation and data transfer blocks are implemented as simple broadcasts, and the common coin is implemented using the scheme from~\cite{Abraham:13}. 
For these implementations to be $k$-resilient equilibria, we need $m>2k$. 
This is a requirement of the rational consensus algorithm. 

Our goal is to assess the overhead of the distributed protocol, when compared to a purely centralised solution, in the case the allocation algorithm is not parallelisable, 
and to assess the potential benefits from parallelisation when computationally expensive allocation algorithms are used. For that purpose, we measure performance gains for different levels $p$ of parallelism, where $p = \lfloor m/(k+1) \rfloor$ is the maximum level of parallelism for each possible $k$ and $p=1$ represents the sequential execution by a trusted auctioneer.
We consider a fixed number of $m=8$ providers in the auctions,
but we vary the number of providers that execute each protocol.

\subsection{Hardware/Software Setup}

In order to obtain a meaningful evaluation of our approach, we have resorted to a prototype implementation on realistic hardware and software environment and deployed it in an experimental testbed for community networks, namely on nodes of the Guifi.net~\cite{ClommunityTestbed}. 
We were given access to 4 different nodes of the experimental testbed, 2 machines in Barcelona (UPC Campus), and one machine each in Barcelona (Hangar) and in Taradell, Spain.
When doing tests executed by more than 4 providers we have instantiated 
multiple VMs in each nodes, ensuring that each VM is allocated a different CPU. 
The machines are Intel Core i7-3770 3.40GHz CPU with 16 GB RAM and 1 TB hard disk, running Proxmox virtualisation engine. 
Our experiment runs in OpenVZ containers with Debian 7 x86 1 CPU, 2 GB RAM, 10 GB storage. 
We have implemented the framework in Python, using PyPy for speed reasons, and used \O{}MQ\cite{ZeroMQ} as the messaging library for the communication.

We have set up a single node that acts as a client, and generates input for all the $n$ users.
This client node sends the requests to the $m$ providers, and receives the results back from all of them.
The values for running time presented in the plots capture the time from when the inputs are generated at this client node,
till the time it receives the results from all the experiment instances.
We run the experiments for 100 rounds, and plot the average values in the graphs.

\begin{figure*}[tbp]
	\centering
	\begin{minipage}[c]{0.48\linewidth}
		\centering
		\includegraphics[width=1.0\textwidth,keepaspectratio]{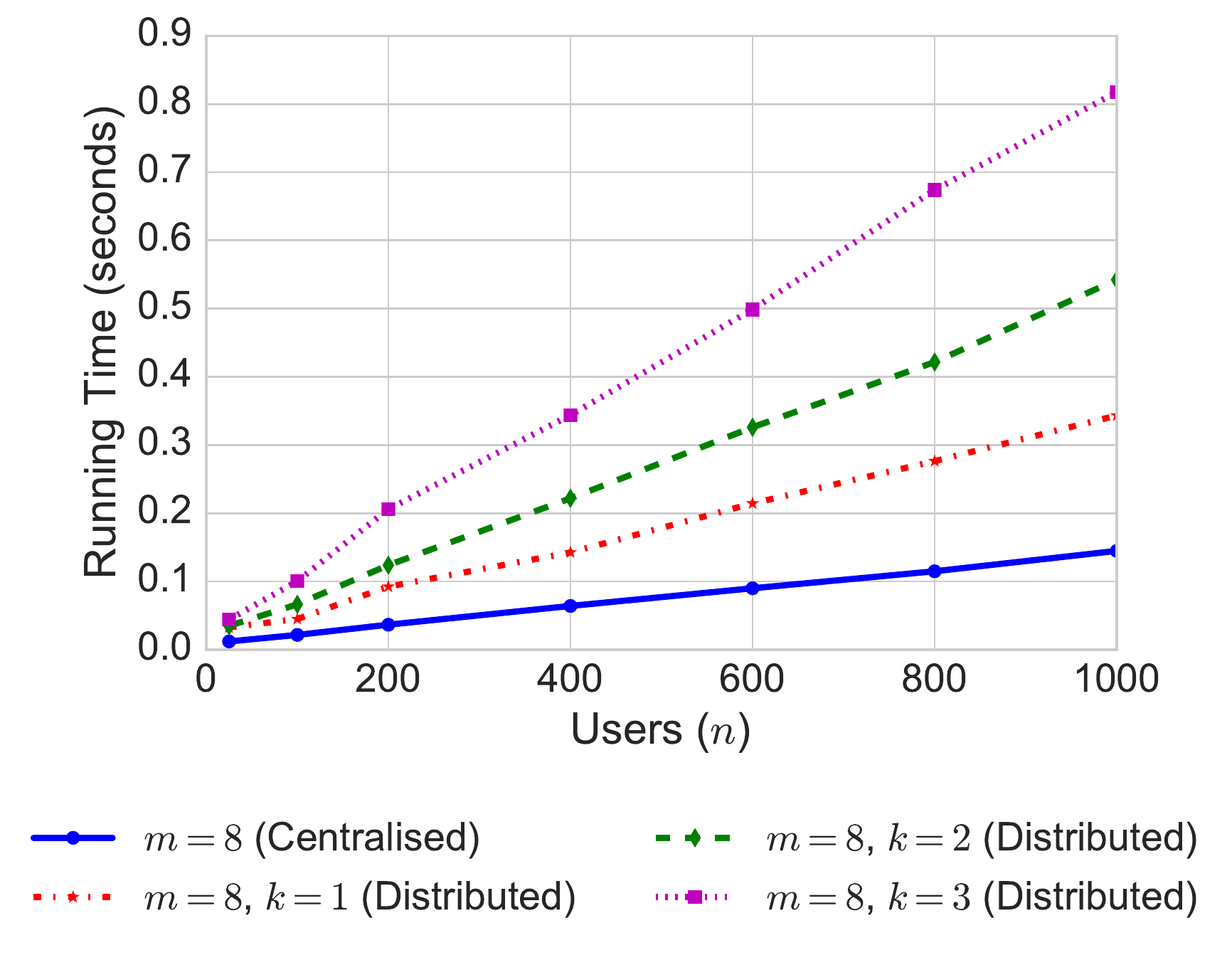}
		\caption{Running time for double auction}
		\label{fig:double-auction-time}
	\end{minipage}
	\quad
	\begin{minipage}[c]{0.48\linewidth}
		\centering
		\includegraphics[width=1.0\textwidth,keepaspectratio]{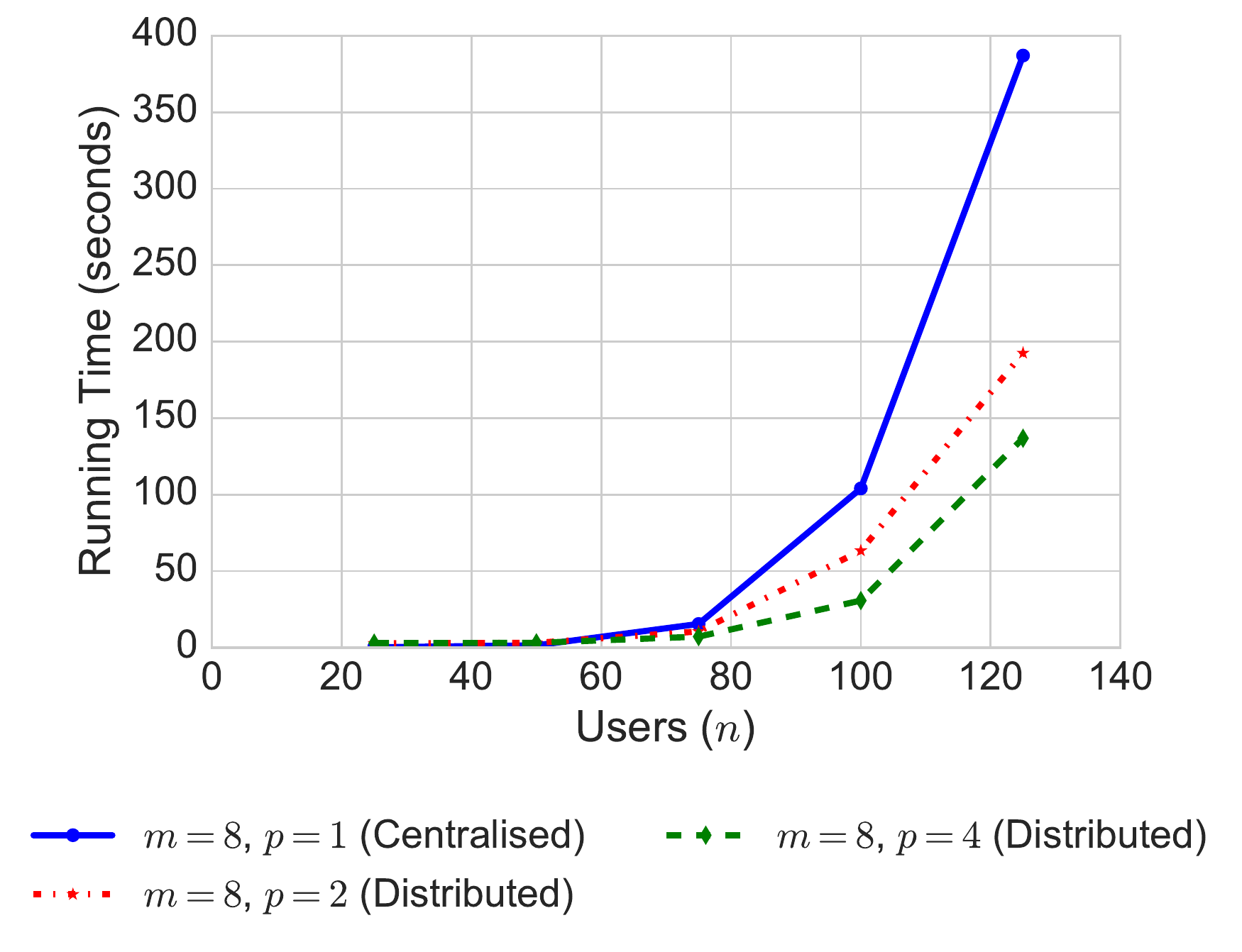}
		\caption{Running time for standard auction}
		\label{fig:standard-auction-time}
	\end{minipage}	
\end{figure*}

\subsection{Double Auction Deployment}

We have used an experimental set up similar to~\cite{Zheng2014Star}, with some slight modifications suitable to our use case.
In both the experiments, the double and standard auction, 
the bids by the users are uniformly distributed in the range $[0.75, 1.25]$,
and the requested bandwidth resource is uniformly distributed in the range $(0, 1]$.
We vary the capacity of the providers depending upon the overall bandwidth required, and scale it using a random factor in $[0.5, 1.5]$ so as to consider both the cases where providers lack the capacity to satisfy all the requests, and where the providers have excess capacity.
The providers have a unit cost of bandwidth uniformly distributed in the range $(0, 1]$.

Figure~\ref{fig:double-auction-time} shows the running time for the double auction algorithm (\S~\ref{sec:instances-double-auction}) as a function of the number of users, for up to 1000 users. 
This algorithm has little computational overhead. 
It is not easily parallelisable but, as it can be observed from the figure, this is irrelevant as the distributed version is dominated by the communication time. 
Also, the communication overhead increases as the number of users increases, since more data has to be exchanged between the providers. 
The figure shows the values obtained for the centralised approach and for the distributed implementation using different values of $k$
and corresponding minimum required number of providers out of a total of 8 involved in the execution, namely, $3$ providers when $k=1$, $5$ when $k=2$, and $8$ when $k=3$. 
Even when 8 providers and 1000 users are used, the 
distributed implementation finishes in less than a second which is perfectly 
acceptable because normally these auctions need to run with reasonable intervals between them.

\subsection{Standard Auction Deployment}

We fix the number of providers to $m=8$ and vary the maximum degree of parallelisation 
by taking $p$ to be $1$, $2$, and $4$, corresponding to a centralised execution, $k=3$, and $k=1$, respectively.
Figure~\ref{fig:standard-auction-time} shows the running time for the standard 
auction (\S~\ref{sec:instances-vcg}) as a function of the number of users, for up to 125 users.
The capacity of the providers is based on the overall bandwidth required at that provider in the bids submitted by the users, and scaled down using a random factor in $[0, 0.25]$, so roughly no more than a quarter of the users win the bids. 
For higher values of $n$, the algorithm~\cite{Zhang2015Truthful} can take in the order of hours to complete, 
which is expected as its computational complexity is $\approx \mathcal{O}(mn^9(\frac{1}{\epsilon})^2)$ for $n$ users and $m$ providers, though it provides better guarantees for social welfare than other alternatives.

Figure~\ref{fig:standard-auction-time} shows that the running time in general grows quickly as $n$ increases, and there is sharp rise in the running time for values of $n$ close to 100.
This is because the running time of the algorithm~\cite{Zhang2015Truthful} is 
a function of the feasible allocation space (which can grow exponentially in the worst case) of the resource allocation problem.
Therefore, the communication and coordination overhead is not significant when compared to the running time of the allocation algorithm. 
On the contrary, the overheads involved in distributing the inputs and aggregating the results from the providers are easily offset by the gains due to parallelisation in this case. 
In Figure~\ref{fig:standard-auction-time},  we can observe significant performance 
gains in the distributed case, for $p=2$ and $p=4$ (i.e., for $k=3$ and $k=1$ respectively). 
For instance, when 8 providers are available and $k=1$ the distributed implementation
takes around 100 seconds while the serial implementation takes around 400 seconds. 
This indicates that our approach allows for scaling the allocation algorithm, 
given that when the network grows more providers also become available.



\section{Conclusion and Future Work}
\label{sec:conclusion}

Resource allocation is a fundamental problem in networked systems and the design of auction mechanisms that can provide properties such as
truthfulness, budget balance, and maximal social welfare have been extensively studied in the literature. 
These works assume a centralised trusted auctioneer that can faithfully execute the allocation algorithm. 
Unfortunately, many networked systems of today, such as ``clouds of clouds'', edge clouds, and community networks, among others, lack a central trusted point of control. 
In this paper we have addressed the theoretical and practical challenges that need to be overcome to bridge this gap. 
More precisely, we have proposed a novel distributed framework for devising Nash equilibria 
distributed simulations of the auctioneer that are resilient to asynchrony and coalitions. 
Furthermore, our framework allows for the parallelisation of the allocation algorithm, 
leveraging the distributed nature of the simulation, which is of paramount practical importance given that, 
in many allocation algorithms, achieving maximal social welfare is computationally intensive. 
We have devised implementations of the framework in a realistic testbed of one of the largest community networks deployed today, 
and have gathered experimental evidence that the overhead of the emulation 
is not significant even in the cases the allocation algorithm cannot be parallelised, 
and brings substantial gains in the case parallelisation is possible. 
This shows that our approach can be used as a building block to implement resource allocation in decentralized networks.


\section*{Acknowledgement}

Amin Khan and and Felix Freitag have been supported by 
    the EU Horizon 2020 Framework Program project netCommons (H2020-688768), 
    and by the Universitat Politécnica de Catalunya BarcelonaTech 
    and the Spanish Government under contract TIN2013-47245-C2-1-R. 
Xavier Vila\c{c}a and Luís Rodrigues have been supported by 
    the ERC Grant Agreement number 307732 and by 
    Funda\c c\~{a}o para a Ci\^{e}ncia e Tecnologia (FCT)
through projects with references 
    PTDC/EEI-SCR/1741/2014 (Abyss) and
    UID/CEC/50021/2013.

%


\bibliographystyle{chicago} 
\bibliography{Bibliography/references} 


\end{document}